\newcommand{\SquareDiamond}{%
	\tikz[baseline=-0.55ex,scale=0.09,line width=0.5pt]
	\draw[rotate=45] (-1,0) rectangle (1,2);
}
\begin{document}
	
	\title{
		Knowledge Reasoning Involving Four Types of Syllogisms}
	\author[Long Wei]{Long Wei}[a]
	\address{Department of Philosophy (Zhuhai), Sun Yat-Sen University, Zhuhai, China}
	\email{657703460@qq.com}
	
	\author[Liheng Hao]{Liheng Hao}[b]
	\address{School of Engineering and Materials Science, Queen Mary University of London, 
		London, United Kingdom}
	\email{haolihengxtw@163.com (Corresponding author)}
	\thanks{Supported by the National Social Science Fund of China under Grant No.24FZXB068.}
	
	\begin{abstract}
		This paper studies the validity and discourse reasoning of non-trivial generalized syllogisms involving the quantifiers in Square\{\textit{most}\} and Square\{\textit{all}\} from the perspective of knowledge reasoning. Firstly, this paper presents knowledge representations for these syllogisms and formally proves the validity of generalized syllogism AMI-1. Subsequently, 19 non-trivial generalized syllogisms, 22 non-trivial valid generalized modal syllogisms, 8 valid classical syllogisms, and 24 valid classical modal syllogisms are respectively deduced from the valid generalized syllogism AMI-1 on the basis of deductive reasoning. Additionally, this paper discusses how to judge the validity of discourse reasoning nested by the above four types of syllogisms, which have four types of figures and different forms. In conclusion, such formal deductions not only provide a theoretical foundation for English language information processing, but also provide methodological insights for studying other syllogistic systems.
	\end{abstract}
	
	\keywords{knowledge reasoning; generalized syllogisms; knowledge mining; validity}

	\maketitle
	
	\section{Introduction}
	
	In natural language reasoning, there are the following four common syllogisms studied in this paper: classical syllogisms \cite{Wes89,Mos08,Hao23}; classical modal ones \cite{McC63,Mal13,Zha23,ZW25}; generalized ones \cite{MN12,EM15,CL24}; and generalized modal ones \cite{XZ23,YZ24}. These four types of syllogisms are classified into four types according to the kinds of quantifiers and modalities they involve. Specifically, classical syllogisms only contain the following four classical quantifiers: all, some, no, and not all, which are regarded as trivial generalized quantifiers. A non-trivial generalized syllogism contains at least one non-trivial generalized quantifier, such as two thirds, several, infinitely many, most. A modal syllogism contains at least one and at most three modalities (the necessary one $\Box$ or the possible one $\SquareDiamond$). A non-trivial classical modal syllogism contains only classical quantifiers and modalities, while a non-trivial generalized modal syllogism contains at least one non-trivial generalized quantifier and modality. In order to ensure logical consistency, each type of valid syllogisms follows specific rules.
	
	Syllogistic reasoning is a common mode of inference in natural language and an important subject of study in the field of knowledge reasoning. This paper focuses exclusively on the four types of syllogisms discussed above. Previous studies on syllogisms have mainly concentrated on questions of validity. In this paper, the generalized syllogism AMI-1 serves as the foundational one, from which other types of syllogisms are derived. Furthermore, the reducible relationships between/among the four types of syllogisms are systematically analyzed.
	
	There are countless generalized quantifiers in natural language, and the generalized syllogisms studied in this paper only involve the following common quantifiers: the trivial generalized quantifiers from Square\{\textit{all}\}=\{\textit{no, all, not all, some}\} and the non-trivial generalized ones from Square\{\textit{most}\}=\{\textit{most, fewer than half of the, at least half of the, at most half of the}\}. Focusing on the valid generalized syllogism AMI-1, this paper first discusses how to infer more generalized syllogisms from the generalized syllogism AMI-1, and then demonstrates how to deduce the other three types of syllogisms from this generalized syllogism.
	\section{Preliminaries}
	In this paper, let \textit{n, p, w, c, k, z, r} be lexical variables, and \textit{D} be their domain. The sets composed of \textit{n, p, w, c, k, z, r} are \textit{N, P, W, C, K, Z, R}, respectively. Let $\lambda$, $\theta$, $\mu$, and $\pi$ be \textit{wff} (i.e. well-formed formula). Let $Q$ be a quantifier, $Q\neg$, $\neg Q$, and $\neg Q\neg$ be its inner, outer, and dual negative one, respectively. These four quantifiers form a modern square, denoted as $\mathrm{Square}\{Q\}$. ‘$|P \cap W|$’ states the cardinality of the intersection of the set $P$ and $W$. 
	‘$\vdash \lambda$’ means that the \textit{wff} $\lambda$ is provable, and 
	‘$\lambda =_{\mathrm{def}} \theta$’ that $\lambda$ can be defined by $\theta$. 
	The connectives (such as $\neg$, $\rightarrow$, $\land$, $\leftrightarrow$) are 
	common operators in classical logic \cite{Hal74}. 
	
	The syllogisms studied in this work only involve the 8 quantifiers in 
	Square\{\textit{all}\}=\{\textit{no, all, not all, some}\}
	and Square\{\textit{most}\}=\{\textit{fewer than half of the, most, at least half of the, at most half of the}\}. Therefore, we only consider the following eight types of propositions: 
	$\textit{no}(p,w)$, $\textit{all}(p,w)$, $\textit{not all}(p,w)$, $\textit{some}(p,w)$, 
	$\textit{fewer than half of the}(p,w)$, $\textit{most}(p,w)$, 
	$\textit{at least half of the}(p,w)$, $\textit{at most half of the}(p,w)$, 
	and they are denoted as: Proposition $E$, $A$, $O$, $I$, $F$, $M$, $S$, and $H$, respectively. 
	A non-trivial generalized syllogism studied in this paper contains at least one quantifier from 
	$\mathrm{Square}\{\textit{most}\}$. 
	For instance, the first figure syllogism 
	$\textit{all}(n,w) \land \textit{most}(p,n) \rightarrow \textit{some}(p,w)$ 
	is shortened to AMI-1. The following is an instance of this syllogism:
	\begin{exa}\leavevmode\par
	\noindent Major premise: All smartphones have touchscreens.\\
	\noindent Major premise: Most mobile phones sold this year in this shop are smartphones.\\
	\noindent Conclusion: Therefore, some mobile phones sold this year in this shop have touchscreens.
		
	\medskip
		
		Let $n$, $w$, and $p$ be the variables that represent 
		`\textit{smartphones}', `\textit{mobile phones that have touchscreens}', 
		and `\textit{mobile phones sold this year in this shop}'. 
		Then, this instance can be symbolized as 
		$\textit{all}(n,w)\land\textit{most}(p,n)\rightarrow\textit{some}(p,w)$, 
		which is abbreviated as AMI-1. 
		Other formal representations are similar to this.
	\end{exa}
	
	\section{A Fragment of Generalized Syllogism System}
	
	This fragment consists of the following six parts:
	\subsection{Primitive Symbols}
	\begin{itemize}
		\item lexical variables: $n, p, w$
		\item quantifiers: \textit{all}, \textit{most}
		\item operators: $\land, \lnot$
		\item brackets: $(, )$
	\end{itemize}
	
	\subsection{Formation Rules}
	A \textit{well-formed formula} (abbreviated as \textit{wff}) is an expression involving quantifiers, lexical variables, and connectives, which can be formed using the following rules:
	\begin{itemize}
		\item If $Q$ is a quantifier, $p$ and $w$ are variables, then $Q(p, w)$ is a \textit{wff}.
		\item If $\lambda$ is a \textit{wff}, then so is $\lnot \lambda$.
		\item If $\lambda$ and $\theta$ are \textit{wffs}, then so is $\lambda \rightarrow \theta$.
	\end{itemize}
	
	\subsection{Basic Axioms}
	\begin{enumerate}[label={A\arabic*:}]
		\item If $\lambda$ is a valid wff in classical logic, then $\vdash \lambda$.
		\item $\vdash \textit{all}(n, w) \land \textit{most}(p, n) \rightarrow \textit{some}(p, w)$ (i.e., the syllogism AMI-1).
	\end{enumerate}
	
	\subsection{Deductive Rules}
	\begin{enumerate}[label={Rule \arabic*:}]
		\item (\textit{Antecedent strengthening}): From $\vdash (\lambda \land \theta \rightarrow \mu)$ and $\vdash (\pi \rightarrow \lambda)$ infer $\vdash (\pi \land \theta \rightarrow \mu)$.
		\item (\textit{Subsequent weakening}): From $\vdash (\lambda \land \theta \rightarrow \mu)$ and $\vdash (\mu \rightarrow \pi)$ infer $\vdash (\lambda \land \theta \rightarrow \pi)$.
		\item (\textit{Anti-syllogism}): From $\vdash (\lambda \land \theta \rightarrow \mu)$ infer $\vdash (\lnot \mu \land \lambda \rightarrow \lnot \theta)$.
	\end{enumerate}

	\subsection{Related Definitions}
	\begin{enumerate}[label={D\arabic*:}]
		\item $(\lambda \land \theta) =_{\mathrm{def}} \lnot (\lambda \rightarrow \lnot \theta)$;
		\item $(\lambda \leftrightarrow \theta) =_{\mathrm{def}} (\lambda \rightarrow \theta) \land (\theta \rightarrow \lambda)$;
		\item $(Q \lnot)(p, w) =_{\mathrm{def}} Q(p, D{-}w)$;
		\item $(\lnot Q)(p, w) =_{\mathrm{def}}$ It is not the case that $Q(p, w)$;
		\item $\textit{all}(p, w)$ is true iff $P \subseteq W$ in any real world;
		\item $\textit{some}(p, w)$ is true iff $P \cap W \neq \emptyset$ in any real world;
		\item $\textit{most}(p, w)$ is true iff $|P \cap W| > 0.5|P|$ in any real world;
		\item $\SquareDiamond \textit{some}(p, w)$ is true iff $P \cap W \neq \emptyset$ in at least one possible world;
		\item $\SquareDiamond \textit{most}(p, w)$ is true iff $|P \cap W| > 0.5|P|$ in at least one possible world.
	\end{enumerate}
	
	\subsection{Relevant Facts\\\hspace*{1.8em}Fact 1 (Inner Negation)}
	
	\begin{enumerate}[label={[1.\arabic*]}]
		\item $\vdash \textit{all}(p, w) \leftrightarrow \textit{no}\lnot(p, w);$
		\item $\vdash \textit{no}(p, w) \leftrightarrow \textit{all}\lnot(p, w);$
		\item $\vdash \textit{some}(p, w) \leftrightarrow \textit{not all}\lnot(p, w);$
		\item $\vdash \textit{not all}(p, w) \leftrightarrow \textit{some}\lnot(p, w);$
		\item $\vdash \textit{most}(p, w) \leftrightarrow \textit{fewer than half of the}\lnot(p, w);$
		\item $\vdash \textit{fewer than half of the}(p, w) \leftrightarrow \textit{most}\lnot(p, w);$
		\item $\vdash \textit{at least half of the}(p, w) \leftrightarrow \textit{at most half of the}\lnot(p, w);$
		\item $\vdash \textit{at most half of the}(p, w) \leftrightarrow \textit{at least half of the}\lnot(p, w).$
	\end{enumerate}
	
	\medskip
	
	\textbf{Fact 2 (Outer Negation):}
	\begin{enumerate}[label={[2.\arabic*]}]
		\item $\vdash \lnot \textit{all}(p, w) \leftrightarrow \textit{not all}(p, w);$
		\item $\vdash \lnot\textit{not all}(p, w) \leftrightarrow \textit{all}(p, w);$
		\item $\vdash \lnot\textit{no}(p, w) \leftrightarrow \textit{some}(p, w);$
		\item $\vdash \lnot\textit{some}(p, w) \leftrightarrow \textit{no}(p, w);$
		\item $\vdash \lnot\textit{most}(p, w) \leftrightarrow \textit{at most half of the}(p, w);$
		\item $\vdash \lnot\textit{at most half of the}(p, w) \leftrightarrow \textit{most}(p, w);$
		\item $\vdash \lnot\textit{fewer than half of the}(p, w) \leftrightarrow \textit{at least half of the}(p, w);$
		\item $\vdash \lnot\textit{at least half of the}(p, w) \leftrightarrow \textit{fewer than half of the}(p, w).$
	\end{enumerate}
	
	\medskip
	
	\textbf{Fact 3 (Symmetry):}
	\begin{enumerate}[label={[3.\arabic*]}]
		\item $\vdash \textit{some}(p, w) \leftrightarrow \textit{some}(w, p);$
		\item $\vdash \textit{no}(p, w) \leftrightarrow \textit{no}(w, p).$
	\end{enumerate}
	
	\medskip
	
	\textbf{Fact 4 (Subordination):}
	\begin{enumerate}[label={[4.\arabic*]}]
		\item $\vdash \textit{all}(p, w) \rightarrow \textit{some}(p, w);$
		\item $\vdash \textit{no}(p, w) \rightarrow \textit{not all}(p, w);$
		\item $\vdash \textit{all}(p, w) \rightarrow \textit{most}(p, w);$
		\item $\vdash \textit{most}(p, w) \rightarrow \textit{some}(p, w);$
		\item $\vdash \textit{at least half of the}(p, w) \rightarrow \textit{some}(p, w);$
		\item $\vdash \textit{all}(p, w) \rightarrow \textit{at least half of the}(p, w);$
		\item $\vdash \textit{at most half of the}(p, w) \rightarrow \textit{not all}(p, w);$
		\item $\vdash \textit{fewer than half of the}(p, w) \rightarrow \textit{not all}(p, w);$
		\item $\vdash \textit{at least half of the}(p, w) \rightarrow \textit{most}(p, w);$
		\item $\vdash \Box Q(p, w) \rightarrow Q(p, w);$
		\item $\vdash \Box Q(p, w) \rightarrow \SquareDiamond Q(p, w);$
		\item $\vdash Q(p, w) \rightarrow \SquareDiamond Q(p, w).$
	\end{enumerate}
	
	\medskip
	
	\textbf{Fact 5 (Duality):}
	\begin{enumerate}[label={[5.\arabic*]}]
		\item $\vdash \lnot\Box Q(p, w) \leftrightarrow \SquareDiamond \lnot Q(p, w);$
		\item $\vdash \lnot\SquareDiamond Q(p, w) \leftrightarrow \Box \lnot Q(p, w).$
	\end{enumerate}
	
	\medskip
	
	Fact 1--5 are the theoretical basis of first-order logic and generalized quantifier theory~\cite{PW06}, so their proofs are omitted.

	\section{Knowledge Mining about the Generalized Syllogism AMI-1}
	A valid syllogism is considered reducible if the logical validity of other syllogisms can be formally inferred from it. From the perspective of knowledge mining, it means that a piece of useful knowledge can be used to mine another piece of useful knowledge. As formally demonstrated in Theorem~\ref{thm:4.2}, this reducible principle specifically manifests in the following manner: The validity of generalized syllogisms positioned after the implication symbol (i.e.$\rightarrow$) can be deduced from the validity of the foundational syllogism AMI-1. This establishes reducible relationships between/among these logically valid syllogistic forms.

	\begin{thm}[AMI-1]
		The generalized syllogism $\textit{all}(n, w) \land \textit{most}(p, n) \rightarrow \textit{some}(p, w)$ is valid.
	\end{thm}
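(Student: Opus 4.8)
The plan is to establish the claim semantically, by verifying directly that the truth condition of the conclusion holds in every real world in which both premises are true, appealing to the truth-conditional definitions D5, D6, and D7. Concretely, I would fix an arbitrary real world and assume that both $\textit{all}(n,w)$ and $\textit{most}(p,n)$ are true there; the goal is then to show that $\textit{some}(p,w)$ is true as well. (Note that syntactically the statement is immediate, since it is precisely axiom A2; the content of the theorem is the soundness of that axiom, i.e.\ its semantic validity, which is what the argument below supplies.)

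First I would unpack the premises. By D5, the truth of $\textit{all}(n,w)$ means $N \subseteq W$, and by D7 the truth of $\textit{most}(p,n)$ means $|P \cap N| > 0.5|P|$; the target, by D6, is $P \cap W \neq \emptyset$. The decisive observation is that the strict inequality furnished by \emph{most} already forces the relevant intersection to be inhabited: since $0.5|P| \geq 0$, we obtain $|P \cap N| > 0$, and hence $P \cap N \neq \emptyset$. I would then invoke the inclusion from the first premise: because $N \subseteq W$, monotonicity of intersection gives $P \cap N \subseteq P \cap W$, so that $P \cap W \supseteq P \cap N \neq \emptyset$. By D6 this is exactly the assertion $\textit{some}(p,w)$, which completes the verification.

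The only step that demands genuine care—and the closest thing to an obstacle—is the existential import concealed in the strict inequality, i.e.\ the passage from \emph{most} to the existential quantifier \emph{some}. One must confirm that the degenerate case $P = \emptyset$ cannot occur under the second premise: were $P$ empty, we would have $|P \cap N| = 0 = 0.5|P|$, contradicting $|P \cap N| > 0.5|P|$. Thus \emph{most} genuinely guarantees a witness in $P \cap N$, and it is this feature that legitimizes the inference to \emph{some}; a weaker quantifier such as \emph{at most half of the} would fail here precisely because it permits an empty intersection. Once this point is secured, the remainder of the argument is routine set-theoretic bookkeeping.
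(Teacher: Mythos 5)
Your proposal is correct and follows essentially the same route as the paper: a direct semantic verification from Definitions D5, D6, and D7, assuming both premises true and deducing $P \cap W \neq \emptyset$. The only (immaterial) difference is the order of steps—the paper first transfers the cardinality bound along $N \subseteq W$ to get $|P \cap W| > 0.5|P|$ and then reads off nonemptiness, whereas you first extract $P \cap N \neq \emptyset$ from the strict inequality and then push it through the inclusion; your explicit remark that the strict inequality rules out $P = \emptyset$ makes the existential-import step slightly more careful than the paper's.
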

	
	\begin{proof}
		Suppose that the two premises $\textit{all}(n, w)$ and $\textit{most}(p, n)$ are true. 
		According to Definitions D5 and D7, $N \subseteq W$ and $|P \cap N| > 0.5|P|$ are true in any real world, thus $|P \cap W| > 0.5|P|$. 
		In line with Definition D6, $\textit{some}(p, w)$ is true when and only when $P \cap W \neq \emptyset$ is true in any real world. 
		Since $|P \cap W| > 0.5|P|$, $\textit{some}(p, w)$ is true. 
		That means that the syllogism AMI-1 is valid.
	\end{proof}
	
	\begin{thm}\label{thm:4.2}
		There are at least the following 19 valid non-trivial generalized syllogisms that can be mined from the syllogism AMI-1:
	\end{thm}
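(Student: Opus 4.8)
The plan is to treat AMI-1 (Axiom A2) as the single root and to exhibit each of the 19 target syllogisms as the endpoint of a short derivation built only from the three deductive rules, where the auxiliary implications and equivalences that those rules consume are supplied by the subordination relations of Fact 4, the negation equivalences of Facts 1 and 2, and the symmetry equivalences of Fact 3. Every derivation begins with $\vdash\textit{all}(n,w)\land\textit{most}(p,n)\rightarrow\textit{some}(p,w)$ and transforms it by one of three moves: strengthening a premise (Rule 1), weakening or symmetrising the conclusion (Rule 2), or contraposing (Rule 3).

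First I would use Rule 1 to vary the minor premise $\textit{most}(p,n)$. Feeding in the subordination $\textit{all}(p,n)\rightarrow\textit{most}(p,n)$ of [4.3] yields $\vdash\textit{all}(n,w)\land\textit{all}(p,n)\rightarrow\textit{some}(p,w)$, and feeding in $\textit{at least half of the}(p,n)\rightarrow\textit{most}(p,n)$ of [4.9] yields the analogue with $S$ in place of $M$; note that the major premise $\textit{all}(n,w)$ admits no strengthening inside the eight-quantifier fragment, so that slot is left untouched. Next I would use Rule 2: since $\textit{some}$ is already the weakest conclusion available, its only productive instance on the affirmative root is the symmetry $\textit{some}(p,w)\rightarrow\textit{some}(w,p)$ of [3.1], which transforms the conclusion and thereby moves each of the above syllogisms into a second figure. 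Composing Rule 1 and Rule 2 in this way produces the affirmative-mood members of the list.

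The negative-mood members come from Rule 3 (anti-syllogism). Taking $\lambda=\textit{all}(n,w)$, $\theta=\textit{most}(p,n)$, $\mu=\textit{some}(p,w)$ gives $\vdash\lnot\textit{some}(p,w)\land\textit{all}(n,w)\rightarrow\lnot\textit{most}(p,n)$, and rewriting the two negations by [2.4] and [2.5] yields $\vdash\textit{no}(p,w)\land\textit{all}(n,w)\rightarrow\textit{at most half of the}(p,n)$; exchanging the roles of the two premises instead produces $\vdash\textit{no}(p,w)\land\textit{most}(p,n)\rightarrow\textit{not all}(n,w)$ after [2.1]. Applying Rule 2 once more now does real work on the negative side, since the new conclusions are weakenable: [4.7] sends $\textit{at most half of the}$ to $\textit{not all}$, and [4.2], [4.8] play the same role for $\textit{no}$ and $\textit{fewer than half of the}$. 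Applying anti-syllogism to the strengthened and symmetrised syllogisms of the previous paragraph, then cleaning up with Facts 1--2 and [3.1]--[3.2], generates the remaining forms involving $E$, $O$, $F$, and $H$, so that each of the 19 syllogisms is reached by at most a handful of rule applications.

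I expect the main obstacle to be bookkeeping rather than any single hard inference. Three side-conditions must be checked for every row: that the derivation never leaves the non-trivial fragment, i.e.\ at least one $\mathrm{Square}\{\textit{most}\}$ quantifier survives, so that the result is genuinely a generalised syllogism and not one of the purely classical forms counted separately in a later theorem; that the 19 resulting mood--figure pairs are pairwise distinct; and that the figure is tracked correctly through the argument-swaps implicit in the symmetry equivalences and in the premise/conclusion exchange of Rule 3. I would therefore organise the proof as a table with one row per syllogism, each row listing the exact sequence of axiom, rule, and fact citations, and it is the verification of these three conditions across all rows where the bulk of the routine effort lies.
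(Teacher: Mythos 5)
Your overall strategy is the one the paper uses: take AMI-1 as the axiomatic root, strengthen the minor premise via the subordination facts and Rule 1 (this is exactly how the paper obtains ASI-1 and its descendants), obtain the negative moods by Rule 3 followed by the outer-negation equivalences of Fact 2, and switch figures with the symmetry facts [3.1]--[3.2]. For roughly eleven of the nineteen forms your sketch traces the same path as the paper's 27-step derivation.

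There is, however, one concrete thin spot. Eight of the nineteen targets --- EMO-1, EMO-2, AMI-3, MAI-3, EAH-1, EAH-2, ASI-3, SAI-3 --- cannot be reached by the moves you actually describe. Symmetry only exchanges the two arguments of \textit{some} and \textit{no}, so from EMO-3 it yields EMO-4 but never EMO-1 or EMO-2, and it cannot turn a second-figure form into a first-figure one; likewise Rule 3 plus outer negation produces only negative conclusions, so it cannot deliver the affirmative third-figure forms AMI-3, MAI-3, ASI-3, SAI-3, which you in fact misdescribe as ``remaining forms involving $E$, $O$, $F$, and $H$.'' The device the paper uses for all eight is inner negation (Fact 1) followed by Definition D3: rewriting, say, $\textit{all}(n,w)$ as $\textit{no}\lnot(n,w)$ and then as $\textit{no}(n,D{-}w)$ replaces a term by its complement, which changes the figure in a way symmetry cannot; e.g.\ steps [9]--[11] of the paper derive EMO-1 and EMO-2 this way, and steps [12]--[13] derive AMI-3 from EMO-3 via Facts [1.2], [1.4] and D3. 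You list Facts 1--2 among your cleanup tools, but without spelling out the D3 complement step your plan does not actually reach these eight syllogisms. A smaller point: your proposed strengthening by [4.3] ($\textit{all}\rightarrow\textit{most}$) and the conclusion-weakenings [4.2], [4.7], [4.8] all land in the purely classical fragment (AAI-1, AEO-2, etc.), so, as your own side-condition correctly anticipates, they contribute nothing to this theorem's count and belong to Theorem~\ref{thm:4.4} instead.
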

	
	\begin{enumerate}[label={(\arabic*)}, leftmargin=3em, itemsep=3pt, topsep=4pt]
		\item $\vdash \text{AMI-1} \rightarrow \text{MAI-4};$
		\item $\vdash \text{AMI-1} \rightarrow \text{AEH-2};$
		\item $\vdash \text{AMI-1} \rightarrow \text{AEH-2} \rightarrow \text{AEH-4};$
		\item $\vdash \text{AMI-1} \rightarrow \text{EMO-3};$
		\item $\vdash \text{AMI-1} \rightarrow \text{EMO-3} \rightarrow \text{EMO-4};$
		\item $\vdash \text{AMI-1} \rightarrow \text{EMO-1};$
		\item $\vdash \text{AMI-1} \rightarrow \text{EMO-1} \rightarrow \text{EMO-2};$
		\item $\vdash \text{AMI-1} \rightarrow \text{EMO-3} \rightarrow \text{AMI-3};$
		\item $\vdash \text{AMI-1} \rightarrow \text{EMO-3} \rightarrow \text{AMI-3} \rightarrow \text{MAI-3};$
		\item $\vdash \text{AMI-1} \rightarrow \text{AEH-2} \rightarrow \text{EAH-2};$
		\item $\vdash \text{AMI-1} \rightarrow \text{AEH-2} \rightarrow \text{EAH-2} \rightarrow \text{EAH-1};$
		\item $\vdash \text{AMI-1} \rightarrow \text{ASI-1};$
		\item $\vdash \text{AMI-1} \rightarrow \text{ASI-1} \rightarrow \text{SAI-4};$
		\item $\vdash \text{AMI-1} \rightarrow \text{ASI-1} \rightarrow \text{ESO-3};$
		\item $\vdash \text{AMI-1} \rightarrow \text{ASI-1} \rightarrow \text{ESO-3} \rightarrow \text{ESO-4};$
		\item $\vdash \text{AMI-1} \rightarrow \text{ASI-1} \rightarrow \text{AEF-2};$
		\item $\vdash \text{AMI-1} \rightarrow \text{ASI-1} \rightarrow \text{AEF-2} \rightarrow \text{AEF-4};$
		\item $\vdash \text{AMI-1} \rightarrow \text{EMO-3} \rightarrow \text{AMI-3} \rightarrow \text{ASI-3};$
		\item $\vdash \text{AMI-1} \rightarrow \text{EMO-3} \rightarrow \text{AMI-3} \rightarrow \text{ASI-3} \rightarrow \text{SAI-3}.$
	\end{enumerate}
	
	\begin{proof}\leavevmode\par
		
		\noindent
		\makebox[0pt][l]{[1] $\vdash all(n,w) \land most(p,n) \rightarrow some(p,w)$}%
		\hfill (i.e. AMI-1, Axiom A2)\par
		
		\noindent
		\makebox[0pt][l]{[2] $\vdash all(n,w) \land most(p,n) \rightarrow some(w,p)$}%
		\hfill (i.e. MAI-4, by [1] and Fact [3.1])\par
		
		\noindent
		\makebox[0pt][l]{[3] $\vdash \lnot some(p,w) \land all(n,w) \rightarrow \lnot most(p,n)$}%
		\hfill (by [1] and Rule 3)\par
		
		\noindent
		\makebox[0pt][l]{[4] $\vdash no(p, w) \land all(n, w) \rightarrow at~most~half~of~the(p, n)$} \par
		\hfill (i.e. AEH-2, by [3], Fact [2.4] and [2.5])\par
		
		\noindent
		\makebox[0pt][l]{[5] $\vdash no(w, p) \land all(n, w) \rightarrow at~most~half~of~the(p, n)$}%
		\hfill (i.e. AEH-4, by [4] and Fact [3.2])\par
		
		\noindent
		\makebox[0pt][l]{[6] $\vdash \lnot some(p, w) \land most(p, n) \rightarrow \lnot all(n, w)$}%
		\hfill (by [1] and Rule 3)\par
		
		\noindent
		\makebox[0pt][l]{[7] $\vdash no(p, w) \land most(p, n) \rightarrow not~all(n, w)$}%
		\hfill (i.e. EMO-3, by [6], Fact [2.1] and [2.4])\par
		
		\noindent
		\makebox[0pt][l]{[8] $\vdash no(w, p) \land most(p, n) \rightarrow not~all(n, w)$}%
		\hfill (i.e. EMO-4, by [7] and Fact [3.2])\par
		
		\noindent
		\makebox[0pt][l]{[9] $\vdash no\lnot(n, w) \land most(p, n) \rightarrow not~all\lnot(p, w)$}%
		\hfill (by [1], Fact [1.1] and [1.3])\par
		
		\noindent
		\makebox[0pt][l]{[10] $\vdash no(n, D{-}ww) \land most(p, n) \rightarrow not~all(p, D{-}w)$}\par
		\hfill (i.e. EMO-1, by [9] and Definition D3)\par
		
		\noindent
		\makebox[0pt][l]{[11] $\vdash no(D{-}w, n) \land most(p, n) \rightarrow not~all(p, D{-}w)$}%
		\hfill (i.e. EMO-2, by [10] and Fact [3.2])\par
		
		\noindent
		\makebox[0pt][l]{[12] $\vdash all\lnot(p, w) \land most(p, n) \rightarrow some\lnot(n, w)$}%
		\hfill (by [7], Fact [1.2] and [1.4])\par
		
		\noindent
		\makebox[0pt][l]{[13] $\vdash all(p, D{-}w) \land most(p, n) \rightarrow some(n, D{-}w)$}%
		\hfill (i.e. AMI-3, by [12] and Definition D3)\par
		
		\noindent
		\makebox[0pt][l]{[14] $\vdash all(p, D{-}w) \land most(p, n) \rightarrow some(D{-}w, n)$}%
		\hfill (i.e. MAI-3, by [13] and Fact [3.1])\par
		
		\noindent
		\makebox[0pt][l]{[15] $\vdash all\lnot(p, w) \land no\lnot(n, w) \rightarrow at~most~half~of~the(p, n)$}%
		\hfill (by [4], Fact [1.1] and [1.2])\par
		
		\noindent
		\makebox[0pt][l]{[16] $\vdash all(p, D{-}w) \land no(n, D{-}w) \rightarrow at~most~half~of~the(p, n)$}\par
		\hfill (i.e. EAH-2, by [15] and Definition D3)\par
		
		\noindent
		\makebox[0pt][l]{[17] $\vdash all(p, D{-}w) \land no(D{-}w, n) \rightarrow at~most~half~of~the(p, n)$}\par
		\hfill (i.e. EAH-1, by [16] and Fact [3.2])\par
		
		\noindent
		\makebox[0pt][l]{[18] $\vdash all(n, w) \land at~least~half~of~the(p, n) \rightarrow some(p, w)$}\par
		\hfill (i.e. ASI-1, by [1], Fact [4.5] and Rule 1)\par
		
		\noindent
		\makebox[0pt][l]{[19] $\vdash all(n, w) \land at~least~half~of~the(p, n) \rightarrow some(w, p)$}\par
		\hfill (i.e. SAI-4, by [18] and Fact [3.1])\par
		
		\noindent
		\makebox[0pt][l]{[20] $\vdash \lnot some(p, w) \land at~least~half~of~the(p, n) \rightarrow \lnot all(n, w)$}%
		\hfill (by [18] and Rule 3)\par
		
		\noindent
		\makebox[0pt][l]{[21] $\vdash no(p, w) \land at~least~half~of~the(p, n) \rightarrow not~all(n, w)$}\par
		\hfill (i.e. ESO-3, by [20], Fact [2.1] and [2.4])\par
		
		\noindent
		\makebox[0pt][l]{[22] $\vdash no(w, p) \land at~least~half~of~the(p, n) \rightarrow not~all(n, w)$}\par
		\hfill (i.e. ESO-4, by [21] and Fact [3.2])\par
		
		\noindent
		\makebox[0pt][l]{[23] $\vdash \lnot some(p, w) \land all(n, w) \rightarrow \lnot at~least~half~of~the(p, n)$}%
		\hfill (by [18] and Rule 3)\par
		
		\noindent
		\makebox[0pt][l]{[24] $\vdash no(p, w) \land all(n, w) \rightarrow fewer~than~half~of~the(p, n)$}\par
		\hfill (i.e. AEF-2, by [23], Fact [2.4] and [2.8])\par
		
		\noindent
		\makebox[0pt][l]{[25] $\vdash no(w, p) \land all(n, w) \rightarrow fewer~than~half~of~the(p, n)$}\par
		\hfill (i.e. AEF-4, by [24] and Fact [3.2])\par
		
		\noindent
		\makebox[0pt][l]{[26] $\vdash all(p, D{-}w) \land at~least~half~of~the(p, n) \rightarrow some(n, D{-}w)$}\par
		\hfill (i.e. ASI-3, by [13], Fact [4.9] and Rule 1)\par
		
		\noindent
		\makebox[0pt][l]{[27] $\vdash all(p, D{-}w) \land at~least~half~of~the(p, n) \rightarrow some(D{-}w, n)$}\par
		\hfill (i.e. SAI-3, by [26] and Fact [3.1])\par

	\end{proof}
	Theorem~\ref{thm:4.3} discusses the method for deducing valid generalized modal syllogisms from the valid generalized syllogism AMI-1, and the following steps should be taken: (1) The resulting syllogism AM$\SquareDiamond$I-1 is deduced from AMI-1 by the subsequent weakening rule. (2) The resulting syllogism $\Box$AMI-1 is deduced from AMI-1 by the antecedent strengthening rule. (3) Other generalized modal syllogisms can be deduced from AM$\SquareDiamond$I-1 or $\Box$AMI-1 by the application of deductive reasoning.
	
	\begin{thm}\label{thm:4.3}
		There are at least the following 22 valid non-trivial generalized modal syllogisms that can be mined from the generalized syllogism AMI-1:
	\end{thm}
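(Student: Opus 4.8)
The plan is to bootstrap two ``seed'' modal syllogisms directly from AMI-1 and then replay the bookkeeping of Theorem~\ref{thm:4.2} with a modality attached. First I would derive $\text{AM}\SquareDiamond\text{I-1}$, namely $\vdash \textit{all}(n,w) \land \textit{most}(p,n) \rightarrow \SquareDiamond\textit{some}(p,w)$: since Axiom~A2 gives $\vdash \textit{all}(n,w) \land \textit{most}(p,n) \rightarrow \textit{some}(p,w)$ and Fact~[4.12] gives $\vdash \textit{some}(p,w) \rightarrow \SquareDiamond\textit{some}(p,w)$, a single application of Rule~2 (subsequent weakening) yields the result. Dually, I would obtain $\Box\text{AMI-1}$, namely $\vdash \Box\textit{all}(n,w) \land \textit{most}(p,n) \rightarrow \textit{some}(p,w)$, by combining Fact~[4.10] ($\vdash \Box\textit{all}(n,w) \rightarrow \textit{all}(n,w)$) with Axiom~A2 via Rule~1 (antecedent strengthening). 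These two serve as the modal analogues of the starting line~[1] in the previous proof.

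From each seed I would then run the same chain of manipulations that produced the nineteen lines of Theorem~\ref{thm:4.2}: symmetry (Fact~3) to swap figures, inner negation (Fact~1) with Definition~D3 to relocate the domain complement $D{-}w$, outer negation (Fact~2) to convert a quantifier into its dual, subordination (Fact~4) together with Rule~1 to weaken a premise, and the anti-syllogism Rule~3 to transpose a negated conclusion into a new premise. Because the modality sits on only one proposition of each seed, most of these steps transfer verbatim, merely carrying the $\Box$ or $\SquareDiamond$ along; this is how the twenty-two target forms are enumerated.

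The genuinely new ingredient---and the step I expect to be the main obstacle---arises whenever Rule~3 is applied to a formula whose conclusion or premise is modal. Anti-syllogism produces a leading negation in front of a modal proposition, such as $\lnot\SquareDiamond\textit{some}(p,w)$ or $\lnot\Box\textit{all}(n,w)$, and this cannot be simplified by Facts~1--2 alone. Here Fact~5 (duality) is essential: $\vdash \lnot\SquareDiamond Q(p,w) \leftrightarrow \Box\lnot Q(p,w)$ and $\vdash \lnot\Box Q(p,w) \leftrightarrow \SquareDiamond\lnot Q(p,w)$ let me push the negation through the modality, after which the outer-negation facts convert the residual $\lnot Q$ into a genuine quantifier of $\mathrm{Square}\{\textit{most}\}$ or $\mathrm{Square}\{\textit{all}\}$. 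The delicate point is to apply duality and outer negation in the correct order so that every derived line lands on one of the eight admitted proposition types with a correctly placed modality; once this discipline is fixed, each of the twenty-two syllogisms follows by a short derivation paralleling a line of Theorem~\ref{thm:4.2}.
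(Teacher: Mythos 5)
Your proposal matches the paper's proof essentially step for step: the paper likewise derives the two seeds $\text{AM}\SquareDiamond\text{I-1}$ (Axiom A2, Fact [4.12], Rule 2) and $\Box\text{AMI-1}$ (Axiom A2, Fact [4.10], Rule 1), propagates $\Box$ onto premises via Fact [4.10] with Rule 1 and swaps figures via Fact 3, and handles the anti-syllogism steps exactly as you anticipate, using Fact [5.2] to turn $\lnot\SquareDiamond\textit{some}(p,w)$ into $\Box\lnot\textit{some}(p,w)$ and Fact [5.1] to turn $\lnot\Box\textit{all}(n,w)$ into $\SquareDiamond\lnot\textit{all}(n,w)$ before applying the outer-negation facts. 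The only cosmetic difference is that the paper's derivation of these 22 forms never actually needs the inner-negation/Definition D3 machinery you mention carrying over from Theorem~\ref{thm:4.2}.
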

	\begin{enumerate}[label={(\arabic*)}, leftmargin=3em, itemsep=3pt, topsep=4pt]
		\item $\vdash \text{AMI-1} \rightarrow \text{AM$\SquareDiamond$I-1};$  
		\item $\vdash \text{AMI-1} \rightarrow \text{AM$\SquareDiamond$I-1} \rightarrow \Box\text{AM$\SquareDiamond$I-1};$
		\item $\vdash \text{AMI-1} \rightarrow \text{A}\text{M$\SquareDiamond$I-1} \rightarrow \text{A}\Box\text{M$\SquareDiamond$I-1};$
		\item $\vdash \text{AMI-1} \rightarrow \text{AM$\SquareDiamond$I-1} \rightarrow \text{A}\Box\text{M$\SquareDiamond$I-1} \rightarrow \Box\text{A$\Box$M$\SquareDiamond$I-1};$
		\item $\vdash \text{AMI-1} \rightarrow \text{AM$\SquareDiamond$I-1} \rightarrow \text{MA$\SquareDiamond$I-4};$
		\item $\vdash \text{AMI-1} \rightarrow \text{AM$\SquareDiamond$I-1} \rightarrow \Box\text{AM$\SquareDiamond$I-1} \rightarrow \text{M$\Box$A$\SquareDiamond$I-4};$
		\item $\vdash \text{AMI-1} \rightarrow \text{AM$\SquareDiamond$I-1} \rightarrow \text{A}\Box\text{M$\SquareDiamond$I-1} \rightarrow \text{MA$\Box$I-4};$
		\item $\vdash \text{AMI-1} \rightarrow \text{AM$\SquareDiamond$I-1} \rightarrow \text{A}\Box\text{M$\SquareDiamond$I-1} \rightarrow \Box\text{A$\Box$M$\SquareDiamond$I-1} \rightarrow \Box\text{M$\Box$A$\SquareDiamond$I-4};$
		\item $\vdash \text{AMI-1} \rightarrow \Box\text{AMI-1};$
		\item $\vdash \text{AMI-1} \rightarrow \Box\text{AMI-1} \rightarrow \Box\text{A$\Box$MI-1};$
		\item $\vdash \text{AMI-1} \rightarrow \Box\text{AMI-1} \rightarrow \text{M$\Box$AI-4};$
		\item $\vdash \text{AMI-1} \rightarrow \Box\text{AMI-1} \rightarrow \Box\text{A$\Box$MI-1} \rightarrow \Box\text{M$\Box$AI-4};$
		\item $\vdash \text{AMI-1} \rightarrow \text{AM$\SquareDiamond$I-1} \rightarrow \Box\text{EMO-3};$
		\item $\vdash \text{AMI-1} \rightarrow \text{AM$\SquareDiamond$I-1} \rightarrow \Box\text{EMO-3} \rightarrow \Box\text{E$\Box$MO-3};$
		\item $\vdash \text{AMI-1} \rightarrow \text{AM$\SquareDiamond$I-1} \rightarrow \Box\text{EMO-3} \rightarrow \Box\text{EM$\SquareDiamond$O-3};$
		\item $\vdash \text{AMI-1} \rightarrow \text{AM$\SquareDiamond$I-1} \rightarrow \Box\text{EMO-3} \rightarrow \Box\text{E$\Box$MO-3} \rightarrow \Box\text{E$\Box$M$\SquareDiamond$O-3};$
		\item $\vdash \text{AMI-1} \rightarrow \text{AM$\SquareDiamond$I-1} \rightarrow \Box\text{EMO-3} \rightarrow \Box\text{EMO-4};$
		\item $\vdash \text{AMI-1} \rightarrow \text{AM$\SquareDiamond$I-1} \rightarrow \Box\text{EMO-3} \rightarrow \Box\text{E$\Box$MO-3} \rightarrow \Box\text{E$\Box$MO-4};$
		\item $\vdash \text{AMI-1} \rightarrow \text{AM$\SquareDiamond$I-1} \rightarrow \Box\text{EMO-3} \rightarrow \Box\text{EM$\SquareDiamond$O-3} \rightarrow \Box\text{EM$\SquareDiamond$O-4};$
		\item $\vdash \text{AMI-1} \rightarrow \text{AM$\SquareDiamond$I-1} \rightarrow \Box\text{EMO-3} \rightarrow \Box\text{E$\Box$MO-3} \rightarrow \Box\text{E$\Box$M$\SquareDiamond$O-3} \rightarrow \Box\text{E$\Box$M$\SquareDiamond$O-4};$
		\item $\vdash \text{AMI-1} \rightarrow \Box\text{AMI-1} \rightarrow \text{EM$\SquareDiamond$O-3};$
		\item $\vdash \text{AMI-1} \rightarrow \Box\text{AMI-1} \rightarrow \text{EM$\SquareDiamond$O-3} \rightarrow \text{EM$\SquareDiamond$O-4}.$
	\end{enumerate}
	
	\begin{proof}\leavevmode\par
		\noindent
		\makebox[0pt][l]{[1] $\vdash all(n,w)\land most(p,n)\rightarrow some(p,w)$}%
		\hfill (i.e. AMI-1, Axiom A2)\par
		
		\noindent
		\makebox[0pt][l]{[2] $\vdash all(n,w)\land most(p,n)\rightarrow\SquareDiamond some(p,w)$}%
		\hfill (i.e. AM$\SquareDiamond$I-1, by [1], Fact [4.12] and Rule 2)\par
		
		\noindent
		\makebox[0pt][l]{[3] $\vdash \Box all(n,w)\land most(p,n)\rightarrow\SquareDiamond some(p,w)$}%
		\hfill (i.e. $\Box$AM$\SquareDiamond$I-1, by [2], Fact [4.10] and Rule 1)\par
		
		\noindent
		\makebox[0pt][l]{[4] $\vdash all(n,w)\land \Box most(p,n)\rightarrow\SquareDiamond some(p,w)$}%
		\hfill (i.e. A$\Box$M$\SquareDiamond$I-1, by [2], Fact [4.10] and Rule 1)\par
		
		\noindent
		\makebox[0pt][l]{[5] $\vdash \Box all(n,w)\land \Box most(p,n)\rightarrow\SquareDiamond some(p,w)$}\par
		\hfill (i.e. $\Box$A$\Box$M$\SquareDiamond$I-1, by [4], Fact [4.10] and Rule 1)\par
		
		\noindent
		\makebox[0pt][l]{[6] $\vdash all(n,w)\land most(p,n)\rightarrow some(w,p)$}%
		\hfill (i.e. MAI-4, by [2] and Fact [3.1])\par
		
		\noindent
		\makebox[0pt][l]{[7] $\vdash \Box all(n,w)\land most(p,n)\rightarrow\SquareDiamond some(w,p)$}%
		\hfill (i.e. M$\Box$A$\SquareDiamond$I-4, by [3] and Fact [3.1])\par
		
		\noindent
		\makebox[0pt][l]{[8] $\vdash all(n,w)\land \Box most(p,n)\rightarrow\SquareDiamond some(w,p)$}%
		\hfill (i.e. MA$\Box\SquareDiamond$I-4, by [4] and Fact [3.1])\par
		
		\noindent
		\makebox[0pt][l]{[9] $\vdash all(n,w)\land \Box most(p,n)\rightarrow\SquareDiamond some(w,p)$}%
		\hfill (i.e. M$\Box$A$\SquareDiamond$I-4, by [5] and Fact [3.1])\par
		
		\noindent
		\makebox[0pt][l]{[10] $\vdash \Box all(n,w)\land most(p,n)\rightarrow some(p,w)$}%
		\hfill (i.e. $\Box$AMI-1, by [1], Fact [4.10] and Rule 1)\par
		
		\noindent
		\makebox[0pt][l]{[11] $\vdash \Box all(n,w)\land most(p,n)\rightarrow some(p,w)$}\par
		\hfill (i.e. $\Box$A$\Box$MI-1, by [10], Fact [4.10] and Rule 1)\par
		
		\noindent
		\makebox[0pt][l]{[12] $\vdash \Box all(n,w)\land most(p,n)\rightarrow some(w,p)$}%
		\hfill (i.e. M$\Box$AI-4, by [10] and Fact [3.1])\par
		
		\noindent
		\makebox[0pt][l]{[13] $\vdash \Box all(n,w)\land \Box most(p,n)\rightarrow some(w,p)$}%
		\hfill (i.e. $\Box$M$\Box$AI-4, by [11] and Fact [3.1])\par
		
		\noindent
		\makebox[0pt][l]{[14] $\vdash \lnot \SquareDiamond some(p,w)\land most(p,n)\rightarrow \lnot all(n,w)$}%
		\hfill (by [2] and Rule 3)\par
		
		\noindent
		\makebox[0pt][l]{[15] $\vdash \Box \lnot some(p,w)\land most(p,n)\rightarrow not~all(n,w)$}%
		\hfill (by [14], Fact [2.1] and [5.2])\par
		
		\noindent
		\makebox[0pt][l]{[16] $\vdash \Box no(p,w)\land most(p,n)\rightarrow not~all(n,w)$}%
		\hfill (i.e. $\Box$EMO-3, by [15] and Fact [2.4])\par
		
		\noindent
		\makebox[0pt][l]{[17] $\vdash \Box no(p,w)\land \Box most(p,n)\rightarrow not~all(n,w)$}\par
		\hfill (i.e. $\Box$E$\Box$MO-3, by [16], Fact [4.10] and Rule 1)\par
		
		\noindent
		\makebox[0pt][l]{[18] $\vdash \Box no(p,w)\land most(p,n)\rightarrow \SquareDiamond not~all(n,w)$}\par
		\hfill (i.e. $\Box$EM$\SquareDiamond$O-3, by [16], Fact [4.12] and Rule 2)\par
	
		\noindent
		\makebox[0pt][l]{[19] $\vdash \Box no(p,w)\land \Box most(p,n)\rightarrow \SquareDiamond not~all(n,w)$}\par
		\hfill (i.e. $\Box$E$\Box$M$\SquareDiamond$O-3, by [17], Fact [4.12] and Rule 2)\par
			
		\noindent
		\makebox[0pt][l]{[20] $\vdash \Box no(w,p)\land most(p,n)\rightarrow not~all(n,w)$}%
		\hfill (i.e. $\Box$EMO-4, by [16] and Fact [3.2])\par
		
		\noindent
		\makebox[0pt][l]{[21] $\vdash \Box no(w,p)\land \Box most(p,n)\rightarrow not~all(n,w)$}%
		\hfill (i.e. $\Box$E$\Box$MO-4, by [17] and Fact [3.2])\par
		
		\noindent
		\makebox[0pt][l]{[22] $\vdash \Box no(w,p)\land most(p,n)\rightarrow \SquareDiamond not~all(n,w)$}%
		\hfill (i.e. $\Box$EM$\SquareDiamond$O-4, by [18] and Fact [3.2])\par
		
		\noindent
		\makebox[0pt][l]{[23] $\vdash \Box no(w,p)\land \Box most(p,n)\rightarrow \SquareDiamond not~all(n,w)$}%
		\hfill (i.e. $\Box$E$\Box$M$\SquareDiamond$O-4, by [19] and Fact [3.2])\par
		
		\noindent
		\makebox[0pt][l]{[24] $\vdash \lnot some(p,w)\land most(p,n)\rightarrow \lnot \Box all(n,w)$}%
		\hfill (by [10] and Rule 3)\par
		
		\noindent
		\makebox[0pt][l]{[25] $\vdash no(p,w)\land most(p,n)\rightarrow \SquareDiamond \lnot all(n,w)$}%
		\hfill (by [24], Fact [2.4] and [5.1])\par
		
		\noindent
		\makebox[0pt][l]{[26] $\vdash no(p,w)\land most(p,n)\rightarrow \SquareDiamond not~all(n,w)$}%
		\hfill (i.e. EM$\SquareDiamond$O-3, by [25] and Fact [2.1])\par
		
		\noindent
		\makebox[0pt][l]{[27] $\vdash no(w,p)\land most(p,n)\rightarrow \SquareDiamond not~all(n,w)$}%
		\hfill (i.e. EM$\SquareDiamond$O-4, by [26] and Fact [3.2])\par
		
	\end{proof}
	
	Deriving valid classical syllogisms from the valid generalized syllogism AMI-1 can be approached as follows: (1) The classical syllogism AAI-1 is derived from AMI-1 by utilizing the antecedent strengthening rule. (2) Other valid classical syllogisms can be deduced from AAI-1 on the basis of deductive reasoning. The specific steps are shown in the following Theorem~\ref{thm:4.4}.
	
	\begin{thm}\label{thm:4.4}
		There are at least the following 8 valid classical syllogisms that can be mined from the generalized syllogism AMI-1:
	\end{thm}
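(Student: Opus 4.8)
The plan is to exploit the close structural parallel between this theorem and Theorem~\ref{thm:4.2}. The single decisive move is to first \emph{classicalize} the foundational syllogism: applying antecedent strengthening (Rule~1) to AMI-1 together with the subordination fact $\vdash \textit{all}(p,n)\rightarrow\textit{most}(p,n)$ (Fact [4.3]) replaces the quantifier \emph{most} by \emph{all}, yielding the purely classical syllogism AAI-1, namely $\vdash \textit{all}(n,w)\land\textit{all}(p,n)\rightarrow\textit{some}(p,w)$. Once this step is made, every subsequent rule and fact I invoke preserves the classical quantifier fragment, so no non-classical quantifier from $\mathrm{Square}\{\textit{most}\}$ can reappear and all derived syllogisms are automatically classical. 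This is the whole point of starting from AAI-1 rather than AMI-1: where the anti-syllogism rule turned \emph{most} into \emph{at most half of the} in Theorem~\ref{thm:4.2}, it will now turn \emph{all} into \emph{not all} (Fact [2.1]), keeping us inside the classical fragment.

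From AAI-1 I would generate the remaining seven syllogisms by the same moves used in the proof of Theorem~\ref{thm:4.2}, reading \emph{all} and \emph{not all} in place of \emph{most} and \emph{at most half of the} wherever anti-syllogism is applied. Concretely: (i) symmetry of \emph{some} (Fact [3.1]) on the conclusion gives AAI-4; (ii) anti-syllogism (Rule~3) applied to AAI-1, rewritten by the outer negations $\lnot\textit{some}=\textit{no}$ and $\lnot\textit{all}=\textit{not all}$ (Facts [2.4] and [2.1]), gives AEO-2, and then symmetry of \emph{no} (Fact [3.2]) gives AEO-4; (iii) anti-syllogism applied to AAI-1 so as to negate the \emph{other} premise (using commutativity of $\land$ from Axiom A1) gives EAO-3, and a further symmetry step gives EAO-4; (iv) the inner negations $\textit{all}=\textit{no}\lnot$ and $\textit{some}=\textit{not all}\lnot$ (Facts [1.1] and [1.3]) together with Definition D3 turn AAI-1 into EAO-1, and a final symmetry step yields EAO-2. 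These eight are exactly the classical moods with particular (weakened) conclusions, namely \emph{Barbari, Bamalip, Camestros, Camenos, Celaront, Cesaro, Felapton, Fesapo}, which is the expected outcome, since the existential character of the conclusion $\textit{some}(p,w)$ of AMI-1 is inherited throughout the chain.

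The mechanism is routine, so there is no deep obstacle; the real care lies in the \emph{bookkeeping of term roles}. Each application of Rule~3 swaps a premise with the negated conclusion and therefore changes the figure, while the symmetry and inner-negation steps permute the variables $n,p,w$ and introduce the complement $D{-}w$. I would accordingly track, after every line, which term is the middle term and which are the subject and predicate of the conclusion, and only then read off the figure number, following the major--minor--conclusion labelling convention already fixed by AMI-1 and AEH-2 in Theorem~\ref{thm:4.2}. Confirming that each of the eight lines lands on a genuinely valid classical mood, rather than an invalid rearrangement of the same letters, is the one place where an error could creep in, and it is settled simply by checking the figure assignment against the position of the middle term in each derived formula.
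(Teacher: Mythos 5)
Your proposal is correct and follows essentially the same route as the paper's proof: antecedent strengthening with Fact [4.3] turns AMI-1 into AAI-1, and then symmetry (Facts [3.1]/[3.2]), anti-syllogism with the outer negations [2.1]/[2.4] applied to each premise in turn, and inner negation via Facts [1.1]/[1.3] with Definition D3 yield exactly the paper's eight moods AAI-1, AAI-4, EAO-3, EAO-4, AEO-2, AEO-4, EAO-1, EAO-2. Your explicit remark about commuting the conjunction before Rule~3 and your identification of the eight results as the weakened classical moods are accurate refinements of what the paper does implicitly.
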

	\begin{enumerate}[label={(\arabic*)}, leftmargin=3em, itemsep=3pt, topsep=4pt]
		\item $\vdash \text{AMI-1} \rightarrow \text{AAI-1};$
		\item $\vdash \text{AMI-1} \rightarrow \text{AAI-1} \rightarrow \text{AAI-4};$
		\item $\vdash \text{AMI-1} \rightarrow \text{AAI-1} \rightarrow \text{EAO-3};$
		\item $\vdash \text{AMI-1} \rightarrow \text{AAI-1} \rightarrow \text{EAO-3} \rightarrow \text{EAO-4};$
		\item $\vdash \text{AMI-1} \rightarrow \text{AAI-1} \rightarrow \text{AEO-2};$
		\item $\vdash \text{AMI-1} \rightarrow \text{AAI-1} \rightarrow \text{AEO-2} \rightarrow \text{AEO-4};$
		\item $\vdash \text{AMI-1} \rightarrow \text{AAI-1} \rightarrow \text{EAO-1};$
		\item $\vdash \text{AMI-1} \rightarrow \text{AAI-1} \rightarrow \text{EAO-1} \rightarrow \text{EAO-2}.$
	\end{enumerate}
		\begin{proof}\leavevmode\par
		\noindent
		\makebox[0pt][l]{[1] $\vdash all(n,w)\land most(p,n)\rightarrow some(p,w)$}%
		\hfill (i.e. AMI-1, Axiom A2)\par
		
		\noindent
		\makebox[0pt][l]{[2] $\vdash all(n,w)\land all(p,n)\rightarrow some(p,w)$}%
		\hfill (i.e. AAI-1, by [1], Fact [4.3] and Rule 1)\par
		
		\noindent
		\makebox[0pt][l]{[3] $\vdash all(n,w)\land all(p,n)\rightarrow some(w,p)$}%
		\hfill (i.e. AAI-4, by [2] and Fact [3.1])\par
		
		\noindent
		\makebox[0pt][l]{[4] $\vdash \lnot some(p,w)\land all(p,n)\rightarrow \lnot all(n,w)$}%
		\hfill (by [2] and Rule 3)\par
		
		\noindent
		\makebox[0pt][l]{[5] $\vdash no(p,w)\land all(p,n)\rightarrow not~all(n,w)$}%
		\hfill (i.e. EAO-3, by [4], Fact [2.1] and [2.4])\par
		
		\noindent
		\makebox[0pt][l]{[6] $\vdash no(w,p)\land all(p,n)\rightarrow not~all(n,w)$}%
		\hfill (i.e. EAO-4, by [5] and Fact [3.2])\par
		
		\noindent
		\makebox[0pt][l]{[7] $\vdash some(p,w)\land all(n,w)\rightarrow \lnot all(p,n)$}%
		\hfill (by [2] and Rule 3)\par
		
		\noindent
		\makebox[0pt][l]{[8] $\vdash no(p,w)\land all(n,w)\rightarrow not~all(p,n)$}%
		\hfill (i.e. AEO-2, by [7], Fact [2.1] and [2.4])\par
		
		\noindent
		\makebox[0pt][l]{[9] $\vdash no(w,p)\land all(n,w)\rightarrow not~all(p,n)$}%
		\hfill (i.e. AEO-4, by [8] and Fact [3.2])\par
		
		\noindent
		\makebox[0pt][l]{[10] $\vdash no\lnot(n,w)\land all(p,n)\rightarrow not~all\lnot(p,w)$}%
		\hfill (by [2], Fact [1.1] and [1.3])\par
		
		\noindent
		\makebox[0pt][l]{[11] $\vdash no(n,D{-}w)\land all(p,n)\rightarrow not~all(p,D{-}w)$}%
		\hfill (i.e. EAO-1, by [10] and Definition D3)\par
		
		\noindent
		\makebox[0pt][l]{[12] $\vdash no(D{-}w,n)\land all(p,n)\rightarrow not~all(p,D{-}w)$}%
		\hfill (i.e. EAO-2, by [11] and Fact [3.2])\par
		
	\end{proof}
	
	The following Theorem~\ref{thm:4.5} discusses the derivation of valid non-trivial classical modal syllogisms from the valid generalized syllogism AMI-1, which can be outlined as follows: (1) The classical syllogism AAI-1 can be deduced by applying the antecedent strengthening rule to AMI-1. (2) The classical modal syllogism AA$\SquareDiamond$I-1 can be derived from AAI-1 by the subsequent weakening rule. (3) Other valid classical modal syllogisms can be inferred from AA$\SquareDiamond$I-1 by means of knowledge reasoning.
	
	\begin{thm}\label{thm:4.5}
		There are at least the following 24 valid non-trivial classical modal syllogisms that can be mined from the generalized syllogism AMI-1:
	\end{thm}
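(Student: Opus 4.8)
The plan is to mirror the deductive architecture already established in the proof of Theorem~\ref{thm:4.3}, replacing the role played there by the generalized quantifier \textit{most} with the classical quantifier \textit{all}, exactly as the classical (non-modal) derivation of Theorem~\ref{thm:4.4} replaces it. I would begin with line [1], AMI-1 (Axiom A2), and apply Fact~[4.3] (that is, $\textit{all}(p,n)\rightarrow\textit{most}(p,n)$) together with Rule~1 (antecedent strengthening) to obtain the classical base syllogism AAI-1. This AAI-1 plays the same foundational role for the classical modal fragment that AMI-1 plays for the generalized fragment. The two modal hubs are then immediate: AA$\SquareDiamond$I-1 follows from AAI-1 by Fact~[4.12] ($Q(p,w)\rightarrow\SquareDiamond Q(p,w)$) and Rule~2 (subsequent weakening), while $\Box$AAI-1 follows from AAI-1 by Fact~[4.10] ($\Box Q(p,w)\rightarrow Q(p,w)$) and Rule~1. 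Every one of the twenty-four target syllogisms branches from these two hubs, with AA$\SquareDiamond$I-1 generating the possibility-conclusion family and $\Box$AAI-1 generating the necessity-premise family.

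From the hubs I would generate the remaining forms using the same three mechanisms that drive Theorem~\ref{thm:4.3}. To \emph{prefix a necessity operator to a premise} I apply Rule~1 with Fact~[4.10], attaching $\Box$ to the major premise, the minor premise, or both (yielding forms such as $\Box$A$\Box$A$\SquareDiamond$I-1 and $\Box$A$\Box$AI-1). To \emph{weaken a conclusion to possibility} I apply Rule~2 with Fact~[4.12]. To \emph{pass from figure~1 to figure~4} I apply Fact~[3.1] ($\textit{some}(p,w)\leftrightarrow\textit{some}(w,p)$) to the conclusion, and for the negative forms I apply Fact~[3.2] ($\textit{no}(p,w)\leftrightarrow\textit{no}(w,p)$) to the major premise. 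Taking these operations in the various admissible orders reproduces the affirmative modal syllogisms of figures~1 and~4.

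The negative syllogisms of the EAO type are obtained by the anti-syllogism Rule~3. Applying Rule~3 to AA$\SquareDiamond$I-1 and to $\Box$AAI-1 yields conditionals whose antecedents and conclusions carry negated classical quantifiers and negated modal operators; I would then normalize these with Fact~2 (outer negation, e.g.\ $\lnot\textit{some}\leftrightarrow\textit{no}$ and $\lnot\textit{all}\leftrightarrow\textit{not all}$) and Fact~5 (modal duality, $\lnot\Box Q\leftrightarrow\SquareDiamond\lnot Q$ and $\lnot\SquareDiamond Q\leftrightarrow\Box\lnot Q$) to recover standard-form syllogisms, namely the $\Box$EAO and EA$\SquareDiamond$O families in figure~3, with their figure-4 variants supplied by Fact~[3.2].

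The main obstacle will be the bookkeeping in exactly this negative family. Because Rule~3 interchanges a premise with the conclusion, the modal operator that sat on the original conclusion must be pushed \emph{through} a negation by Fact~5 before it can be reinterpreted as a qualifier on the new conclusion, and the order in which Rule~3, Fact~2, and Fact~5 are applied decides whether one lands on a $\Box$-premise form or a $\SquareDiamond$-conclusion form. Keeping these conversions coherent across all the lines, and verifying that each derived line is a genuinely distinct valid classical modal syllogism rather than a relabeling of an earlier one, is where the care lies; the individual derivations are otherwise routine once the correct Fact is selected at each step.
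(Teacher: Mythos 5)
Your overall architecture matches the paper's for most of the list: derive AAI-1 from AMI-1 by Fact [4.3] and Rule 1, weaken it to AA$\SquareDiamond$I-1 by Fact [4.12] and Rule 2, decorate premises with $\Box$ via Fact [4.10] and Rule 1, pass between figures with Facts [3.1] and [3.2], and obtain negative forms by Rule 3 followed by outer negation (Fact 2) and modal duality (Fact 5). For sixteen of the twenty-four listed syllogisms this is exactly the paper's derivation.

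There is, however, a genuine gap. You explicitly confine the negative family to ``the $\Box$EAO and EA$\SquareDiamond$O families in figure~3, with their figure-4 variants supplied by Fact [3.2]'', but eight of the listed syllogisms --- the EA$\SquareDiamond$O-1 and EA$\SquareDiamond$O-2 families, items (17)--(24) --- live in figures 1 and 2. These are not reachable by Rule 3 plus outer negation and symmetry, because they require replacing the term $w$ by its complement $D-w$. The paper gets them by a mechanism your proposal never invokes: inner negation. Concretely, Facts [1.1] and [1.3] turn AA$\SquareDiamond$I-1 into $no\lnot(n,w)\land all(p,n)\rightarrow \SquareDiamond not~all\lnot(p,w)$, which Definition D3 rewrites as $no(n,D{-}w)\land all(p,n)\rightarrow \SquareDiamond not~all(p,D{-}w)$, i.e.\ EA$\SquareDiamond$O-1; Fact [3.2] then yields EA$\SquareDiamond$O-2, and Fact [4.10] with Rule 1 supplies the $\Box$-decorated variants. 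Without this step your plan delivers at most sixteen of the required twenty-four forms. A secondary mismatch: you posit $\Box$AAI-1 as a second hub generating a ``necessity-premise family'', by analogy with $\Box$AMI-1 in Theorem~\ref{thm:4.3}, but none of the twenty-four target syllogisms descends from $\Box$AAI-1 --- every one passes through AA$\SquareDiamond$I-1, with necessity operators on premises added afterwards; derivations branching from $\Box$AAI-1 would produce valid but unlisted forms.
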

	\begin{enumerate}[label={(\arabic*)}, leftmargin=3em, itemsep=3pt, topsep=4pt]
		\item $\vdash \text{AMI-1} \rightarrow \text{AAI-1} \rightarrow \text{AA$\SquareDiamond$I-1};$
		\item $\vdash \text{AMI-1} \rightarrow \text{AAI-1} \rightarrow \text{AA$\SquareDiamond$I-1} \rightarrow \Box\text{AA$\SquareDiamond$I-1};$
		\item $\vdash \text{AMI-1} \rightarrow \text{AAI-1} \rightarrow \text{AA$\SquareDiamond$I-1} \rightarrow \text{A}\Box\text{A$\SquareDiamond$I-1};$
		\item $\vdash \text{AMI-1} \rightarrow \text{AAI-1} \rightarrow \text{AA$\SquareDiamond$I-1} \rightarrow \Box\text{AA$\SquareDiamond$I-1} \rightarrow \Box\text{A}\Box\text{A$\SquareDiamond$I-1};$
		\item $\vdash \text{AMI-1} \rightarrow \text{AAI-1} \rightarrow \text{AA$\SquareDiamond$I-1} \rightarrow \text{AA$\SquareDiamond$I-4};$
		\item $\vdash \text{AMI-1} \rightarrow \text{AAI-1} \rightarrow \text{AA$\SquareDiamond$I-1} \rightarrow \Box\text{AA$\SquareDiamond$I-1} \rightarrow \text{A}\Box\text{A$\SquareDiamond$I-4};$
		\item $\vdash \text{AMI-1} \rightarrow \text{AAI-1} \rightarrow \text{AA$\SquareDiamond$I-1} \rightarrow \text{A}\Box\text{A$\SquareDiamond$I-1} \rightarrow \Box\text{AA$\SquareDiamond$I-4};$
		\item $\vdash \text{AMI-1} \rightarrow \text{AAI-1} \rightarrow \text{AA$\SquareDiamond$I-1} \rightarrow \Box\text{AA$\SquareDiamond$I-1} \rightarrow \text{A}\Box\text{A$\SquareDiamond$I-1} \rightarrow \Box\text{A}\Box\text{A$\SquareDiamond$I-4};$
		\item $\vdash \text{AMI-1} \rightarrow \text{AAI-1} \rightarrow \text{AA$\SquareDiamond$I-1} \rightarrow \Box\text{EAO-3};$
		\item $\vdash \text{AMI-1} \rightarrow \text{AAI-1} \rightarrow \text{AA$\SquareDiamond$I-1} \rightarrow \Box\text{EAO-3} \rightarrow \Box\text{E}\Box\text{AO-3}.$
		\item $\vdash \text{AMI-1} \rightarrow \text{AAI-1} \rightarrow \text{AA$\SquareDiamond$I-1} \rightarrow \Box\text{EAO-3} \rightarrow \Box\text{EAO-4};$
		\item $\vdash \text{AMI-1} \rightarrow \text{AAI-1} \rightarrow \text{AA$\SquareDiamond$I-1} \rightarrow \Box\text{EAO-3} \rightarrow \Box\text{E$\Box$AO-3} \rightarrow \Box\text{E$\Box$AO-4};$
		\item $\vdash \text{AMI-1} \rightarrow \text{AAI-1} \rightarrow \text{AA$\SquareDiamond$I-1} \rightarrow \Box\text{AA$\SquareDiamond$I-1} \rightarrow \Box\text{EA$\SquareDiamond$O-3};$
		\item $\vdash \text{AMI-1} \rightarrow \text{AAI-1} \rightarrow \text{AA$\SquareDiamond$I-1} \rightarrow \Box\text{AA$\SquareDiamond$I-1} \rightarrow \Box\text{EA$\SquareDiamond$O-3} \rightarrow \Box\text{E$\Box$A$\SquareDiamond$O-3};$
		\item $\vdash \text{AMI-1} \rightarrow \text{AAI-1} \rightarrow \text{AA$\SquareDiamond$I-1} \rightarrow \Box\text{AA$\SquareDiamond$I-1} \rightarrow \Box\text{EA$\SquareDiamond$O-3} \rightarrow \Box\text{EA$\Box$O-4};$
		\item $\vdash \text{AMI-1} \rightarrow \text{AAI-1} \rightarrow \text{AA$\SquareDiamond$I-1} \rightarrow \Box\text{AA$\SquareDiamond$I-1} \rightarrow \Box\text{EA$\SquareDiamond$O-3}  \rightarrow \Box\text{E$\Box$A$\SquareDiamond$O-3}\\ \rightarrow \Box\text{E$\Box$A$\SquareDiamond$O-4};$
		\item $\vdash \text{AMI-1} \rightarrow \text{AAI-1} \rightarrow \text{AA$\SquareDiamond$I-1} \rightarrow \text{EA$\SquareDiamond$O-1};$
		\item $\vdash \text{AMI-1} \rightarrow \text{AAI-1} \rightarrow \text{AA$\SquareDiamond$I-1} \rightarrow \text{EA$\SquareDiamond$O-1} \rightarrow \text{EA$\Box$O-1};$
		\item $\vdash \text{AMI-1} \rightarrow \text{AAI-1} \rightarrow \text{AA$\SquareDiamond$I-1} \rightarrow \text{EA$\SquareDiamond$O-1} \rightarrow \text{E$\Box$A$\SquareDiamond$O-1};$
		\item $\vdash \text{AMI-1} \rightarrow \text{AAI-1} \rightarrow \text{AA$\SquareDiamond$I-1} \rightarrow \text{EA$\SquareDiamond$O-1} \rightarrow \Box\text{EA$\SquareDiamond$O-1} \rightarrow \Box\text{E$\Box$A$\SquareDiamond$O-1}.$
		\item $\vdash \text{AMI-1} \rightarrow \text{AAI-1} \rightarrow \text{AA$\SquareDiamond$I-1} \rightarrow \text{EA$\SquareDiamond$O-1} \rightarrow \text{EA$\SquareDiamond$O-2};$
		\item $\vdash \text{AMI-1} \rightarrow \text{AAI-1} \rightarrow \text{AA$\SquareDiamond$I-1} \rightarrow \text{EA$\SquareDiamond$O-1} \rightarrow \Box\text{EA$\SquareDiamond$O-1} \rightarrow \Box\text{EA$\SquareDiamond$O-2};$
		\item $\vdash \text{AMI-1} \rightarrow \text{AAI-1} \rightarrow \text{AA$\SquareDiamond$I-1} \rightarrow \text{EA$\SquareDiamond$O-1} \rightarrow \text{E$\Box$A$\SquareDiamond$O-1} \rightarrow \text{E$\Box$A$\SquareDiamond$O-2};$
		\item $\vdash \text{AMI-1} \rightarrow \text{AAI-1} \rightarrow \text{AA$\SquareDiamond$I-1} \rightarrow \text{EA$\SquareDiamond$O-1} \rightarrow \Box\text{EA$\SquareDiamond$O-1}
		\rightarrow \Box\text{E$\Box$A$\SquareDiamond$O-1}\\ \rightarrow \Box\text{E$\Box$A$\SquareDiamond$O-2}.$
	\end{enumerate}
	\begin{proof}\leavevmode\par
		\noindent
		\makebox[0pt][l]{[1] $\vdash all(n,w)\land most(p,n)\rightarrow some(p,w)$}%
		\hfill (i.e. AMI-1, Axiom A2)\par
		
		\noindent
		\makebox[0pt][l]{[2] $\vdash all(n,w)\land all(p,n)\rightarrow some(p,w)$}%
		\hfill (i.e. AAI-1, by [1], Fact [4.3] and Rule 1)\par
		
		\noindent
		\makebox[0pt][l]{[3] $\vdash all(n,w)\land all(p,n)\rightarrow \SquareDiamond some(p,w)$}%
		\hfill (i.e. AA$\SquareDiamond$I-1, by [2], Fact [4.12] and Rule 2)\par
		
		\noindent
		\makebox[0pt][l]{[4] $\vdash \Box all(n,w)\land all(p,n)\rightarrow \SquareDiamond some(p,w)$}%
		\hfill (i.e. $\Box$AA$\SquareDiamond$I-1, by [3], Fact [4.10] and Rule 1)\par
		
		\noindent
		\makebox[0pt][l]{[5] $\vdash all(n,w)\land \Box all(p,n)\rightarrow \SquareDiamond some(p,w)$}%
		\hfill (i.e. A$\Box$A$\SquareDiamond$I-1, by [3], Fact [4.10] and Rule 1)\par
		
		\noindent
		\makebox[0pt][l]{[6] $\vdash \Box all(n,w)\land \Box all(p,n)\rightarrow \SquareDiamond some(p,w)$}\par
		
		\hfill (i.e. $\Box$A$\Box$A$\SquareDiamond$I-1, by [4], Fact [4.10] and Rule 1)\par
		
		\noindent
		\makebox[0pt][l]{[7] $\vdash all(n,w)\land all(p,n)\rightarrow \SquareDiamond some(w,p)$}%
		\hfill (i.e. AA$\SquareDiamond$I-4, by [3] and Fact [3.1])\par
		
		\noindent
		\makebox[0pt][l]{[8] $\vdash \Box all(n,w)\land all(p,n)\rightarrow \SquareDiamond some(w,p)$}%
		\hfill (i.e. A$\Box$A$\SquareDiamond$I-4, by [4] and Fact [3.1])\par
		
		\noindent
		\makebox[0pt][l]{[9] $\vdash all(n,w)\land \Box all(p,n)\rightarrow \SquareDiamond some(w,p)$}%
		\hfill (i.e. $\Box$AA$\SquareDiamond$I-4, by [5] and Fact [3.1])\par
		
		\noindent
		\makebox[0pt][l]{[10] $\vdash \Box all(n,w)\land \Box all(p,n)\rightarrow \SquareDiamond some(w,p)$}%
		\hfill (i.e. $\Box$A$\Box$A$\SquareDiamond$I-4, by [6] and Fact [3.1])\par
		
		\noindent
		\makebox[0pt][l]{[11] $\vdash \lnot \SquareDiamond some(p,w)\land all(p,n)\rightarrow \lnot all(n,w)$}%
		\hfill (by [3] and Rule 3)\par
		
		\noindent
		\makebox[0pt][l]{[12] $\vdash \Box \lnot some(p,w)\land all(p,n)\rightarrow not~all(n,w)$}%
		\hfill (by [11], Fact [2.1] and [5.2])\par
		
		\noindent
		\makebox[0pt][l]{[13] $\vdash \Box no(p,w)\land all(p,n)\rightarrow not~all(n,w)$}%
		\hfill (i.e. $\Box$EAO-3, by [12] and Fact [2.4])\par
		
		\noindent
		\makebox[0pt][l]{[14] $\vdash \Box no(p,w)\land \Box all(p,n)\rightarrow not~all(n,w)$}\par
		\hfill (i.e. $\Box$E$\Box$AO-3, by [13], Fact [4.10] and Rule 1)\par
		
		\noindent
		\makebox[0pt][l]{[15] $\vdash \Box no(w,p)\land all(p,n)\rightarrow not~all(n,w)$}%
		\hfill (i.e. $\Box$EAO-4, by [13] and Fact [3.2])\par
		
		\noindent
		\makebox[0pt][l]{[16] $\vdash \Box no(w,p)\land \Box all(p,n)\rightarrow not~all(n,w)$}%
		\hfill (i.e. $\Box$E$\Box$AO-4, by [14] and Fact [3.2])\par
		
		\noindent
		\makebox[0pt][l]{[17] $\vdash \lnot \SquareDiamond some(p,w)\land all(p,n)\rightarrow \lnot \Box all(n,w)$}%
		\hfill (by [4] and Rule 3)\par
		
		\noindent
		\makebox[0pt][l]{[18] $\vdash \Box \lnot some(p,w)\land all(p,n)\rightarrow \SquareDiamond \lnot all(n,w)$}%
		\hfill (by [17], Fact [5.1] and [5.2])\par
		
		\noindent
		\makebox[0pt][l]{[19] $\vdash \Box no(p,w)\land all(p,n)\rightarrow \SquareDiamond not~all(n,w)$}%
		\hfill (i.e. $\Box$EA$\SquareDiamond$O-3, by [18], Fact [2.1] and [2.4])\par
		
		\noindent
		\makebox[0pt][l]{[20] $\vdash \Box no(p,w)\land \Box all(p,n)\rightarrow \SquareDiamond not~all(n,w)$}\par
		\hfill (i.e. $\Box$E$\Box$A$\SquareDiamond$O-3, by [19], Fact [4.10] and Rule 1)\par
		
		\noindent
		\makebox[0pt][l]{[21] $\vdash \Box no(w,p)\land all(p,n)\rightarrow \SquareDiamond not~all(n,w)$}%
		\hfill (i.e. $\Box$EA$\SquareDiamond$O-4, by [19] and Fact [3.2])\par
		
		\noindent
		\makebox[0pt][l]{[22] $\vdash \Box no(w,p)\land \Box all(p,n)\rightarrow \SquareDiamond not~all(n,w)$}%
		\hfill (i.e. $\Box$E$\Box$A$\SquareDiamond$O-4, by [20] and Fact [3.2])\par
		
		\noindent
		\makebox[0pt][l]{[23] $\vdash no\lnot (n,w)\land all(p,n)\rightarrow \SquareDiamond not~all\lnot (p,w)$}%
		\hfill (by [3], Fact [1.1] and [1.3])\par
		
		\noindent
		\makebox[0pt][l]{[24] $\vdash no(n,D{-}w)\land all(p,n)\rightarrow \SquareDiamond not~all(p,D{-}w)$}\par
		\hfill (i.e. EA$\SquareDiamond$O-1, by [23] and Definition D3)\par
		
		\noindent
		\makebox[0pt][l]{[25] $\vdash \Box no(n,D{-}w)\land all(p,n)\rightarrow \SquareDiamond not~all(p,D{-}w)$}\par
		\hfill (i.e. $\Box$EA$\SquareDiamond$O-1, by [24], Fact [4.10] and Rule 1)\par
		
		\noindent
		\makebox[0pt][l]{[26] $\vdash no(n,D{-}w)\land \Box all(p,n)\rightarrow \SquareDiamond not~all(p,D{-}w)$}\par
		\hfill (i.e. EA$\Box\SquareDiamond$O-1, by [24], Fact [4.10] and Rule 1)\par
		
		\noindent
		\makebox[0pt][l]{[27] $\vdash \Box no(n,D{-}w)\land \Box all(p,n)\rightarrow \SquareDiamond not~all(p,D{-}w)$}\par
		\hfill (i.e. $\Box$E$\Box$A$\SquareDiamond$O-1, by [25], Fact [4.10] and Rule 1)\par
		
		\noindent
		\makebox[0pt][l]{[28] $\vdash no(D{-}w,n)\land all(p,n)\rightarrow \SquareDiamond not~all(p,D{-}w)$}
		\hfill (i.e. EA$\SquareDiamond$O-2, by [24] and Fact [3.2])\par
		
		\noindent
		\makebox[0pt][l]{[29] $\vdash \Box no(D{-}w,n)\land all(p,n)\rightarrow \SquareDiamond not~all(p,D{-}w)$}\par
		\hfill (i.e. $\Box$EA$\SquareDiamond$O-2, by [25] and Fact [3.2])\par
		
		\noindent
		\makebox[0pt][l]{[30] $\vdash no(D{-}w,n)\land \Box all(p,n)\rightarrow \SquareDiamond not~all(p,D{-}w)$}\par
		\hfill (i.e. E$\Box$A$\SquareDiamond$O-2, by [26] and Fact [3.2])\par
		
		\noindent
		\makebox[0pt][l]{[31] $\vdash \Box no(D{-}w,n)\land \Box all(p,n)\rightarrow \SquareDiamond not~all(p,D{-}w)$}\par
		\hfill (i.e. $\Box$E$\Box$A$\SquareDiamond$O-2, by [27] and Fact [3.2])\par
		
	\end{proof}
	
	So far, from the perspective of knowledge reasoning, this paper has discussed in detail how to deduce the above four types of syllogisms based on the generalized syllogism AMI-1.
	
	\section{Discourse Reasoning}
	
	This part analyzes how discourse reasoning is constructed by the nesting of the following four types of syllogisms: classical ones, classical modal ones, generalized ones, and generalized modal ones. Now, let’s consider the following examples of discourse reasoning in natural language:
	
	‘(1) All people who have systematically studied educational theory have mastered basic teaching methods. (2) All graduates from normal universities have systematically studied educational theory. (3) Therefore, all graduates from normal universities have mastered basic teaching methods. (4) Most people who have mastered basic teaching methods can effectively manage classes. (5) Therefore, most graduates from normal universities can effectively manage classes. (6) Most people who can effectively manage classes possibly won teaching innovation awards. (7) Therefore, some graduates from normal universities possibly won teaching innovation awards. (8) All people who have won teaching innovation awards can improve students’ grades. (9) Therefore, some graduates from normal universities can possibly improve students’ grades.’
	
	This discourse reasoning is nested by the following four different types of syllogisms:

	\begin{description}[style=unboxed,leftmargin=0pt]
		\item[\textbf{[1]The classical syllogism AAA-1}] \leavevmode\par
		\noindent Major premise: (1) All people who have systematically studied educational theory have mastered basic teaching methods.\\
		\noindent Minor premise: \uline{(2) All graduates from normal universities have systematically studied educational theory.}\\
		\noindent Conclusion: (3) All graduates from normal universities have mastered basic teaching methods.
	\end{description}
	
	Let $c$ be a variable of graduates from normal universities, 
	$z$ be a variable of people who have mastered basic teaching methods, 
	and $k$ be a variable of people who have systematically studied educational theory. 
	The above classical syllogism is symbolized as 
	$all(k, z) \land all(c, k) \rightarrow all(c, z)$, 
	which is valid.
	\begin{proof}	
		Assume that both premises are true. According to Definition D5, 
		$all(k, z)$ and $all(c, k)$ are true when and only when 
		$K \subseteq Z$ and $C \subseteq K$ are true in any real world. 
		Therefore, $C \subseteq Z$ is also true, and in the light of Definition~D5, 
		$all(c, z)$ is true. In other words, the classical syllogism AAA-1 is valid.
	\end{proof}

	\begin{description}[style=unboxed,leftmargin=0pt]
		\item[\textbf{[2]The generalized syllogism MAM-1}] \leavevmode\par
		\noindent Major premise: (4) Most people who have mastered basic teaching methods can effectively manage classes.\\
		\noindent Minor premise: \uline{(3) All graduates from normal universities have mastered basic teaching methods}.\\
		\noindent Conclusion: (5) Most graduates from normal universities can effectively manage classes.
	\end{description}
	
	Let $p$ be a variable representing people who can effectively manage classes. The above generalized syllogism is symbolized as $most(z,p) \land all(c,z) \rightarrow most(c,p)$. It is a valid syllogism.
	
	\begin{proof}
		Assume that both premises are true, in line with Definition D5 and D7, 
		$most(z, p)$ and $all(c, z)$ are true when and only when 
		$|Z \cap P| > 0.5|Z|$ is true in any real world, 
		and $C \subseteq Z$ is true in any real world, respectively. 
		Consequently, $|C \cap P| > 0.5|C|$ is true in any real world. 
		According to Definition D7, $most(c, p)$ is true when and only when 
		$|C \cap P| > 0.5|C|$ is true in any real world, 
		thus $most(c, p)$ is true. 
		It can be concluded that the generalized syllogism MAM-1 is valid.
	\end{proof}
	
	\begin{description}[style=unboxed,leftmargin=0pt]
		\item[\textbf{[3]The generalized modal syllogism $\SquareDiamond \text{MM}\SquareDiamond\text{I-1}$}] \leavevmode\par
		\noindent Major premise: (6) Most people who can effectively manage classes possibly won teaching innovation awards.\\
		\noindent Minor premise: \uline{(5) Most graduates from normal universities can effectively manage classes}.\\
		\noindent Conclusion: (7) Some graduates from normal universities possibly won teaching innovation awards.
	\end{description}
	
	Let $r$ be a variable representing people who won teaching innovation awards. 
	The above generalized modal syllogism can thus be formalized as 
	$\SquareDiamond most(p, r) \land most(c, p) \rightarrow \SquareDiamond some(c, r)$. 
	Its validity is demonstrated as follows:
	
	\begin{proof}
		Assume that both premises are true. According to Definitions D7 and D9, 
		$\SquareDiamond most(p, r)$ is true when and only when $|P \cap R| > 0.5|P|$ 
		is true in at least one possible world, and $most(c, p)$ is true when and only when 
		$|C \cap P| > 0.5|C|$ is true in any real world. Since any real world is also a possible world, 
		thus $|P \cap R| > 0.5|P|$ and $|C \cap P| > 0.5|C|$ are true in at least one possible world. 
		Therefore, $C \cap R \neq \emptyset$ is true in at least one possible world. 
		Then, according to Definition D8, $\SquareDiamond some(c, r)$ is true. 
		Thus, the generalized modal syllogism $\SquareDiamond \text{MM}\SquareDiamond\text{I-1}$ is valid.
	\end{proof}
	
	\begin{description}[style=unboxed,leftmargin=0pt]
		\item[\textbf{[4]The classical modal syllogism $\text{A}\SquareDiamond \text{I}\SquareDiamond\text{I-1}$}] \leavevmode\par
		\noindent Major premise: (8) All people who have won teaching innovation awards can improve students’ grades.\\
		\noindent Minor premise: \uline{(7) Some graduates from normal universities possibly won teaching innovation awards}.\\
		\noindent Conclusion: (9) Some graduates from normal universities can possibly improve students’ grades.
	\end{description}
	
	Let $n$ be a variable representing people who can improve students’ grades. 
	The above classical modal syllogism can be symbolized as 
	$all(r, n) \land \SquareDiamond some(c, r) \rightarrow \SquareDiamond some(c, n)$. 
	Regarding the validity of this syllogism, the proof is presented below.
	\begin{proof}
		Assume that both premises are true. In the light of Definition~D5, 
		$all(r,n)$ is true when and only when $R \subseteq N$ is true in any real world. 
		Since any real world is also a possible world, $R \subseteq N$ is true in at least one possible world. 
		According to Definition~D8, $\SquareDiamond some(c,r)$ is true when and only when $C \cap R \neq \emptyset$ 
		is true in at least one possible world. Consequently, $R \subseteq N$ and $C \cap R \neq \emptyset$ 
		are true in at least one possible world, and then $C \cap N \neq \emptyset$ is true in at least one possible world. 
		According to Definition~D8, $\SquareDiamond some(c,n)$ is true. 
		That means that the classical modal syllogism $\text{A}\SquareDiamond \text{I}\SquareDiamond\text{I-1}$ is valid.
	\end{proof}
	
	Therefore, the discourse reasoning composed of the above four valid syllogisms is valid. This discourse reasoning can be represented by the following tree diagram:
	\[
	\begin{array}{cccccc}
		&\hspace{10em} \underline{all(k, z) \hspace{3em} all(c, k)} & \\[6pt]
		&\hspace{7em}\underline{most(z, p) \hspace{3em} all(c, z)} & \\[6pt]
		&\hspace{4em}\underline{\SquareDiamond most(p, r) \hspace{3em} most(c, p)} & \\[6pt]
		&\underline{all(r, n) \hspace{3em} \SquareDiamond some(c, r)} & \\[6pt]
		& \SquareDiamond some(c, n) &
	\end{array}
	\]
	
	In this tree diagram, the conclusion of the classical syllogism $\text{AAA-1}$ serves as the minor premise of the generalized syllogism $\text{MAM-1}$, the conclusion of the syllogism $\text{MAM-1}$ acts as the minor premise of the generalized modal syllogism $\SquareDiamond \text{MM} \SquareDiamond \text{I-1}$, and the conclusion of the  syllogism $\SquareDiamond \text{MM} \SquareDiamond \text{I-1}$
	 functions as the minor premise of the classical modal syllogism $\text{A}\SquareDiamond\text{I}\SquareDiamond\text{I-1}$
	.
	
	In conclusion, the validity of this nested discourse reasoning is proved by the step-by-step proof of the validity of its constituent syllogisms. The discourse reasoning nested by several syllogisms is valid, if and only if, all syllogisms involved are valid. Otherwise, even if one syllogism is invalid, the entire discourse reasoning is invalid.
	\section{Conclusion and Future Work}
	
	This paper presents an analytical framework for complex syllogistic reasoning that incorporates quantifier-driven mechanisms in Square\{\textit{all}\} and Square\{\textit{most}\}. To achieve this, the paper initially formalizes generalized syllogisms, then proves the validity of the syllogism AMI-1 with the generalized quantifier most, and further deduces the other (19+22+8+24=)73 valid syllogisms. The mutual reducible relationships of all these valid syllogisms are for two reasons: (1) within Square\{\textit{all}\}, any of the four classical quantifiers (that is, \textit{no, all, not all, some}) can be used to define the other three; (2) similarly, so can any of the four generalized quantifiers in Square\{\textit{most}\} (that is, \textit{fewer than half of the, most, at least half of the, at most half of the}).
	
	In this paper, the 73 valid syllogisms are deduced from the valid generalized syllogism $\text{AMI-1}$ based on the given rules, facts and definitions. However, it is possible that not all derivable syllogisms have been included, suggesting that more syllogisms may be derivable from the syllogism $\text{AMI-1}$ by making full use of these rules, facts and definitions. Moreover, does the generalized syllogism $\text{AMI-1}$ in this paper have other meta-logical properties including soundness and completeness? This issue calls for a thorough analysis of the topic.

	\bibliographystyle{alphaurl}
	\bibliography{references}

\begin{thebibliography}{McC63}

\bibitem[CL24]{CL24}
Qing Cao and Hui Li.
\newblock The reducible relations between/among valid generalized syllogisms
  with the generalized quantifiers in square\{most\}.
\newblock {\em SCIREA Journal of Philosophy}, 4(2):34--42, 2024.
\newblock \href {https://doi.org/10.54647/philosophy720092}
  {\path{doi:10.54647/philosophy720092}}.

\bibitem[EM15]{EM15}
J{\"o}rg Endrullis and Lawrence~S. Moss.
\newblock Syllogistic logic with “most”.
\newblock In Valeria de~Paiva, Ruy J. G.~B. de~Queiroz, Lawrence~S. Moss,
  Daniel Leivant, and Anjolina~Grisi de~Oliveira, editors, {\em Logic,
  Language, Information, and Computation – 22nd International Workshop,
  WoLLIC 2015, Proceedings}, volume 9160 of {\em Lecture Notes in Computer
  Science}, pages 124--139. Springer, Berlin \& Heidelberg, 2015.
\newblock \href {https://doi.org/10.1007/978-3-662-47709-0_10}
  {\path{doi:10.1007/978-3-662-47709-0_10}}.

\bibitem[Hal74]{Hal74}
Paul~R. Halmos.
\newblock {\em Naive Set Theory}.
\newblock Springer-Verlag, New York, 1974.

\bibitem[Hao23]{Hao23}
Yijiang Hao.
\newblock The reductions between/among aristotelian syllogisms based on the
  syllogism aii-3.
\newblock {\em SCIREA Journal of Philosophy}, 3(1):12--22, 2023.
\newblock \href {https://doi.org/10.54647/philosophy720083}
  {\path{doi:10.54647/philosophy720083}}.

\bibitem[Mal13]{Mal13}
Marko Malink.
\newblock {\em Aristotle's Modal Syllogistic}.
\newblock Harvard University Press, 2013.

\bibitem[McC63]{McC63}
Storrs McCall.
\newblock {\em Aristotle's Modal Syllogisms}.
\newblock North-Holland Publishing Company, Amsterdam, 1963.

\bibitem[MN12]{MN12}
Petra Murinová and Vilém Novák.
\newblock A formal theory of generalized intermediate syllogisms.
\newblock {\em Fuzzy Sets and Systems}, 186(1):47--80, 2012.
\newblock \href {https://doi.org/10.1016/j.fss.2011.07.004}
  {\path{doi:10.1016/j.fss.2011.07.004}}.

\bibitem[Mos08]{Mos08}
Lawrence~S. Moss.
\newblock Completeness theorems for syllogistic fragments.
\newblock In Fritz Hamm and Stephan Kepser, editors, {\em Logics for Linguistic
  Structures}, pages 143--174. De Gruyter Mouton, Berlin, New York, 2008.
\newblock \href {https://doi.org/10.1515/9783110211788.143}
  {\path{doi:10.1515/9783110211788.143}}.

\bibitem[PW06]{PW06}
Stanley Peters and Dag Westerst{\aa}hl.
\newblock {\em Quantifiers in Language and Logic}.
\newblock Oxford University Press, Oxford, 2006.

\bibitem[Wes89]{Wes89}
Dag Westerst{\aa}hl.
\newblock Aristotelian syllogisms and generalized quantifiers.
\newblock {\em Studia Logica}, 48(4):577--585, 1989.
\newblock \href {https://doi.org/10.1007/BF00370209}
  {\path{doi:10.1007/BF00370209}}.

\bibitem[XZ23]{XZ23}
Jing Xu and Xiaojun Zhang.
\newblock How to obtain valid generalized modal syllogisms from valid
  generalized syllogisms.
\newblock {\em Applied Science and Innovative Research}, 7(2):45--51, 2023.
\newblock \href {https://doi.org/10.22158/asir.v7n2p45}
  {\path{doi:10.22158/asir.v7n2p45}}.

\bibitem[YZ24]{YZ24}
Siyi Yu and Xiaojun Zhang.
\newblock The validity of generalized modal syllogisms with the generalized
  quantifiers in square\{most\}.
\newblock {\em SCIREA Journal of Philosophy}, 4(1):11--22, 2024.
\newblock \href {https://doi.org/10.54647/philosophy720087}
  {\path{doi:10.54647/philosophy720087}}.

\bibitem[Zha23]{Zha23}
Cheng Zhang.
\newblock How to deduce the other 91 valid aristotelian modal syllogisms from
  the syllogism {$\Box$I$\Box$A$\Box$I-3}.
\newblock {\em Applied Science and Innovative Research}, 7(1):46--57, 2023.
\newblock \href {https://doi.org/10.22158/asir.v7n1p46}
  {\path{doi:10.22158/asir.v7n1p46}}.

\bibitem[ZW25]{ZW25}
Xiaojun Zhang and Yuzhen Wang.
\newblock Knowledge mining based on the classical modal syllogism
  {$\Box$AI$\SquareDiamond$I-3}.
\newblock {\em SCIREA Journal of Computer}, 10(2):37--47, 2025.
\newblock \href {https://doi.org/10.54647/computer520458}
  {\path{doi:10.54647/computer520458}}.

\end{thebibliography}
	
\end{document}